\newcommand{\ir}{\mathtt{CER}}
\newtheorem{theorem}{Theorem}
\title{A generic multi-Pauli compilation framework for limited connectivity}
\author{Adam Glos\thanks{adam.glos@algorithmiq.fi}\xspace}
\author{\"Ozlem Salehi}
\affil{Algorithmiq Ltd, Kanavakatu 3C 00160 Helsinki, Finland}
\date{}
\begin{document}
\maketitle

\begin{abstract}
Efficient and effective compilation of quantum circuits remains an important aspect of executing quantum programs. In this paper, we propose a generic compilation framework particularly suitable for limited connectivity, that extends many of the known techniques for Pauli network synthesis. Our compilation method, built on the introduced Clifford Executive Representation, stands out by considering the implementation of multiple Pauli operators at once, which are not necessarily commuting. The proposed technique also allows for effective Clifford circuit synthesis. We benchmark our methods against circuits resulting from the Variational Quantum Eigensolver algorithm and the results show that the proposed methods outperform the state-of-the-art.
\end{abstract}

\section{Introduction}

Quantum computing holds the promise to revolutionize many fields due to the intrinsic properties of quantum mechanics. Among the quantum computational models, gate-based quantum computers have gained notable interest. For gate-based quantum computers, the design procedure of a quantum algorithm involves the translation of a program written in a high-level quantum programming language into a series of 1 and 2-qubit gates forming the quantum circuit, which is known as the process of compilation. One of the challenges lies in how efficiently this process can be achieved, as minimizing the number of such gates increases the robustness of the quantum circuit, which is particularly important for the current Noisy Intermediate Scale Quantum (NISQ) \cite{preskill2018quantum} devices, but also decreases the execution time of the quantum circuits which is relevant for both NISQ and fault-tolerant eras.

One of the fields that quantum computing has the potential to transform is quantum simulation,  which is an idea that already arose in 1980s in the paper ``Simulating Physics with Computers'' by Feynman \cite{feynman2018simulating}. Simulating molecular systems is crucial for understanding their properties and their behavior in various chemical reactions. This is achieved by finding the ground state of an Electronic Structure Problem Hamiltonian, which becomes intractable for classical computational chemistry methods for large molecules. Obtaining more accurate and efficient calculations of electronic structures is essential for many fields including drug discovery \cite{cao2018potential, maniscalco2022quantum} and material science \cite{bauer2020quantum}, and quantum computing has the potential to transform those fields.

One of the most celebrated quantum algorithms that can approximate ground state energy is quantum phase estimation (QPE)~\cite{abrams1999quantum}. However, practical implementation of QPE is not possible with current noisy intermediate-scale quantum computers due to the substantial amount of gate and qubit requirements. One promising approach to harness the potential NISQ devices is the Variational Quantum Eigensolver (VQE) algorithm proposed by Peruzzo et al. \cite{peruzzo2014variational}, which relies on the variational principle to estimate the ground state energy. In VQE, a parametrized circuit, i.e. an ansatz is constructed and the expectation value is estimated by measuring the quantum state w.r.t. Hamiltonian. This quantity is then minimized by optimization of the parameters on a classical computer. The choice of the ansatz should take into account both the expressibility of the quantum state and the required resources, as the depth and number of gates should be tractable for the near-term devices. Popular choices include unitary coupled cluster with single and doubles (UCCSD) \cite{peruzzo2014variational,romero2018strategies}, hardware efficient ansatz \cite{kandala2017hardware} and iterative ansatz construction methods like Adaptive Derivative-Assembled PseudoTrotter (ADAPT) VQE \cite{grimsley2019adaptive} and its variants \cite{tang2021qubit, yordanov2021qubit, ramoa2024reducing, anastasiou2024tetris, nykanen2022mitigating}. Independent of the chosen method, the VQE ansatz can often be seen as a series of operators $e^{-i \theta P}$ where $P$ is a Pauli string. Once constructed, it should undergo an efficient compilation procedure obeying the constraints of the underlying architecture. The goal is to reduce two-qubit entangling gates like CNOT gates, as such gates have lower fidelities compared to single-qubit gates on the current NISQ hardware. 

Various methods have been proposed to improve the compilation procedure for general Pauli strings. Some of the works use gate cancellation between the subcircuits implementing consecutive Pauli strings \cite{li2022paulihedral, jin2024tetris} and insertion of swap gates to match limited connectivity. Commutativity between the Pauli strings has been employed for simultaneous diagonalization of the Pauli strings in  \cite{van2020circuit, cowtan2020generic}, yet those works consider only all-to-all connectivity and can be used only for commutative Pauli strings. In the thesis \cite{meijer2024advances}, it is suggested as a future work that the method presented in \cite{cowtan2020generic} can be modified to be architecture aware using Steiner trees. 

The methods above work by implementing a single or a group of Paulis simultaneously, but eventually they `reset' the state of the quantum system. Thus one can see this as a Pauli (or group of Paulis) -wise compilation process. However, alternative approaches were proposed when Paulis used are in some special form. For example, phase polynomials, i.e. circuits composed only of product of $Z$ and $I$ Paulis can be represented by a parity table~\cite{de2020architecture,vandaele2022phase}, where each row represents a qubit and each column represents a parity whose corresponding angle is not equal to 0. Then the goal is to obtain a parity network~\cite{amy2017cnot}, a circuit that consists of only CNOT gates such that each parity in the parity table appears at least once. With such a parity network, it is enough to place $Z$-rotation gates to implement phase polynomials. \cite{nash2020quantum, kissinger2019cnot, de2020architecture, vandaele2022phase} consider efficient implementation of parity networks employing concepts like Steiner trees to obey the given connectivity. In \cite{meijer2023towards}, the concept is generalized to Pauli strings that consist of either $Z$ or $X$, and mixed ZX-phase polynomials are used for the intermediate representation~\cite{gogioso2022annealing,winderl2023recursively}, tracking parity table in `$Z$ space' and `$X$ space'. These compilation techniques disallows interaction between the $Z$ and $X$ parity tables, e.g. a change in the $X$ parity table depends only on the $X$ parity table and selected CNOT. Both methods present a way of `lazy synthesis' of Pauli networks, in which resetting of the quantum state is not done after some Pauli operators are implemented, but the implementation continues from the current stage of the quantum state, in this context parity tables.

Extending the idea of constructing a parity network where each parity appears at least once to the general Pauli strings, another line of research focuses on efficient construction of Pauli networks. Pauli networks are a generalization of parity networks, where the goal is to produce a Clifford circuit so that each Pauli can be implemented on a single qubit at some point during the execution of the Clifford circuit. In \cite{de2024faster}, authors propose greedy heuristics to minimize either the gate count or depth of entangling gates to implement a given sequence of Pauli strings. Their method however focuses on implementing operations in groups consisting of commuting Pauli strings, and while a method for an ordered sequence of Paulis was proposed, this was done by reducing the problem to the identification of commuting groups at the beginning of the implementation procedure. In \cite{liu2024quclear}, Pauli strings are divided into commuting blocks, and a Clifford circuit is generated recursively to implement the Paulis in each block. Neither of the mentioned works accounts for limited connectivity, however, the first one was recently generalized to limited connectivity~\cite{dilkes2024greedy}. In \cite{martiel2022architecture}, the authors propose lazy synthesis, in which the circuit is built by diagonalizing each component of the Pauli string through a Clifford circuit and a sequence of CNOTs applied along a Steiner tree, which is used to make the procedure architecture aware. 

The papers that avoid immediate resetting of the quantum circuit often rely on some form of an intermediate representation that enables an easier analysis and manipulation of the quantum circuit. Some examples include parity matrix \cite{kissinger2019cnot}, parity table \cite{vandaele2022phase}, parity map \cite{winderl2023recursively}, phase polynomial matrix \cite{meijer2023towards}, tableau \cite{van2020circuit, martiel2022architecture}, Pauli IR program \cite{li2022paulihedral}. Such intermediate presentation is modified with Clifford operations that are added to the circuit. This can be used to track information on the effect of implementing a particular one-qubit Pauli rotation, namely the non-local Pauli exponentiation it corresponds to. 

In this paper, we extend the so far used intermediate representations into Clifford Executive Representation (CER). CER stores the information about the Pauli exponentiation that can be implemented with single-qubit rotation gates at a given moment in the circuit. Using CER, we introduce a generalization of the known compilation techniques where we consider an implementation of multiple, not necessarily commuting Pauli exponentiations at once. Our method comprises a compression component where we move the information scattered among multiple qubits to a small number of qubits and an implementation component where the exponentiations are actually being implemented.  To our knowledge, this is the first compilation method that targets the implementation of multiple, not necessarily commuting Paulis simultaneously and by design works for limited connectivity. Built on this framework, we also present a new Clifford synthesis algorithm. The designed algorithm targets in particular the Clifford circuit that naturally appears at the end of the circuit as a result of the lazy synthesis compilation method. We show that our compilation and Clifford resynthesis techniques outperform the state-of-the-art methods in the literature~\cite{winderl2023architecture} dedicated to limited connectivity. Finally, we demonstrate the applicability of our methods through various examples for VQE. While our algorithm is designed for Pauli networks, its applicability goes beyond as most of the other types of circuits can be converted into Pauli networks.

The paper is organized as follows. In Sec.~\ref{sec:ecr} we present basics on our 
Clifford Executive Representation. In Sec.~\ref{sec:mp-transpilers} we present our family of compilation methods, Clifford resynthesis methods, and benchmarks of our algorithms on ADAPT-VQE circuits for molecular Hamiltonians. Finally in Sec.~\ref{sec:discussion} we conclude and discuss our results.

\section{Clifford Executive Representation} \label{sec:ecr}

Clifford Executive Representation (CER) is a useful formalism to describe the state of the quantum circuit during the computation. It eases the process of compilation and optimization of quantum circuits compared to working with the original quantum circuit. In this section, we start by introducing our intermediate representation for Clifford circuits, which will be the backbone of our compilation algorithm. 

Let there be an $N$-qubit system. Let $\mathbb P_N$ be set of all $N$-long Pauli strings, possibly with a minus sign. Clifford Executive Representation is a function $\ir : \{1,\dots, N\} \times \{Z,X,Y\} \to \mathbb {P}_N$ that satisfies the following:
\begin{enumerate}
    \item $\{\ir_{q,Z}\}_{q}$ and  $\{\ir_{q,X}\}_{q}$  form a valid stabilizers and destabilizers sets,
    \item $\ir_{q,Y} = \ir_{q,Z} \cdot \ir_{q,X}$,
\end{enumerate}
The interpretation of the $\ir$ can be made as follows. 
Suppose that a quantum circuit consists of a Clifford circuit $\mathcal{C}$, a $P$-rotation on qubit $q$ with angle $\theta$ where $P \in \{Z,X,Y\}$ and $C^\dagger$. Let $\ir$ be the corresponding CER obtained from $\mathcal{C}$ (the method for obtaining the corresponding $\ir$ given a Clifford circuit will be explained later). Then the quantum circuit implements the Pauli rotation $ \exp(i\theta\ir_{q,P})$.

Such intermediate representation is insightful as one can look ahead for the Pauli strings that can be implemented at the moment with just single-qubit rotation gates. In~\cite{martiel2022architecture}, a very similar representation called tableau representation but without $\ir_{q,Y}$, namely tableau representation was used for describing the lazy synthesis procedure; we refer readers to Sec. 5 for formal introduction and Appendix C.2 for a particular example of lazy synthesis using tableau representation. Notably, the authors prove that the elements of $\{\ir_{q,Z}\}_{q} \cup \{\ir_{q,X}\}_{q}$ are, in fact, the stabilizers and destabilizers of the inverse of the Clifford circuit that generates the $\ir$. In this paper we go a step further: we use this representation as a strategic resource on how to pursue the compilation. In particular, using such representation we will generalize two approaches to the compilation of the Pauli networks by considering multiple, non-commuting Paulis simultaneously.

In the rest of this section, we will discuss some of the basic facts on $\ir$ and explain how to obtain the corresponding $\ir$ given a Clifford circuit. Let us start by giving some definitions. For a fixed qubit $q$, $P$-register stores the Pauli string $\ir_{q,P}$ for $P \in \{X,Y,Z\}$. For an $N$-qubit system, by $Z_q$, we denote the 1-local $Z$ operator acting on $q$-th qubit, similarly for $X_q$ and $Y_q$.  In addition, 
\begin{itemize}
    \item For any qubit $q$, the Pauli strings stored in the $Y$-register can be directly derived from $Z$- and $X$-registers.
    \item Intermediate representation of empty Clifford circuit satisfies $\ir_{q,P} = P_q$.
\end{itemize}
The first property is true by the very definition of $\ir$. This is also in agreement with the usual Aaronson-Gottesman~\cite{aaronson2004improved} representation of the Clifford circuits, where $2N$ Paulis are sufficient to represent a Clifford circuit. In our case, we decided to keep the $Y$-register Pauli strings for an easier explanation of our compilation techniques, although, in practical implementation, it is more effective to keep track only of $Z$ and $X$ registers. The second property can be directly obtained by observing that a $P$-rotation by angle $\theta$ on qubit $q$ implemented on an empty circuit is equivalent to $ \exp(-i\theta P_q)$. 

Next, we will describe how to obtain the corresponding $\ir$ given a Clifford circuit, which is also in line with~\cite{aaronson2004improved}. Since any Clifford circuit can be implemented as a composition of Clifford gates, i.e. Hadamard, $S$, and CNOT, it is enough to provide rules for these three gates: the rules for other gates can be determined by first decomposing the Clifford circuit into the aforementioned 3 basic gates, and then composing the effect of the gates. The $\ir$ modification rules for the 3 basic gates are presented in Table~\ref{tab:ir_atomic_gates}. As one can see, 1-qubit gates are not changing the Pauli strings themselves except of possible permutation of those among the registers and/or changing the sign of those. On the other hand, CNOT gate introduces 4 new Paulis. 

\begin{table}[t]
    \centering \small
    \begin{tabular}{@{}cc@{}}
 \multicolumn{2}{c}{Hadamard($q$)}\\
    \toprule
          Register&$q$\\
          \midrule 
          $Z$ &$\ir_{q,X}$\\
          $X$&$\ir_{q,Z}$\\
          $Y$&$ -\ir_{q,Y}$\\
          \bottomrule
    \end{tabular}
    \hspace{.4cm} 
      \begin{tabular}{@{}cc@{}}
 \multicolumn{2}{c}{S($q$)}\\
    \toprule
          Register&  $q$\\
          \midrule 
          $Z$ &  $\ir_{q,Z}$\\
          $X$&  $ -\ir_{q,Y}$\\
          $Y$&  $\ir_{q,X}$\\
          \bottomrule
    \end{tabular}    
    \hspace{.4cm} 
    \begin{tabular}{@{}cll@{}}
 \multicolumn{3}{c}{CNOT($c$,$t$)}\\
        \toprule
          Register &$c$&  $t$\\
          \midrule
          $Z$&$\ir_{c,Z}$&  $\ir_{t,Z}\ir_{c,Z}$\\
          $X$ &$\ir_{c,X}\ir_{t,X}$&  $ \ir_{t,X}$\\
          $Y$&$ \ir_{c,Y} \ir_{t,X}$&  $\ir_{t,Y}\ir_{c,Z}$\\
          \bottomrule
    \end{tabular}
    \caption{The new content of the registers after applying Hadamard, $S$ acting on qubit $q$, and CNOT for  control qubit $c$ and target qubit $t$. $\ir$ of the other qubits do not change. }
    \label{tab:ir_atomic_gates}
\end{table}

One can use these rules to provide some simple bounds for the number of CNOTs required for a set of Paulis to be implemented. Let us consider an $N$-qubit,  Clifford circuit with $K$ CNOTs. We are asking how many different Paulis $M$ can appear in the intermediate representations. Taking into account that each CNOT application produces at most 4 Pauli operators that did not exist before, and using the fact that initial intermediate representation already has $3N$ Pauli strings, we can derive a simple bound:
\begin{equation}
        M \leq 3N + 4K.
\end{equation}
Now let us state the following problem: how many CNOTs we need to implement $M$ Pauli strings, assuming the circuit will be in the form of single-qubit rotations each implementing a particular Pauli rotation, sandwiched by Cliford operations? From above we can clearly see that $K \geq \frac{M-3N}{4}$. This directly implies that e.g. UCCSD ansatze requires at least $\Omega(N^4)$ CNOTs to be implemented (as usually expected).

\subsection{Special-purpose Clifford circuits} \label{sec:clifford-circuits}

In the next section, we will explain the usefulness of CER for compilation processes. In there, we will heavily be using so-called Clifford databases composed of small Clifford circuits. Without putting a context in the compilation yet, we define here a few of the special-purpose Clifford circuit types that can be easily explained with CER:
\begin{enumerate}
\item \textbf{Implementation Clifford circuits} Let us assume we want to implement a Pauli network for the ordered set of $l$ Pauli operators $S_1,\dots, S_l$. Implementation Clifford circuits are the ones, for which after applying each gate from the Clifford circuit on the initial CER, it will be modified to contain $S_i$ at some point (taking into account the order of Paulis, but also the commutativity relation among the Paulis). Note that since as we showed before 1-qubit gates are not creating new Pauli operators in CER, it is enough to investigate qubits that were altered by 2-qubit operations like CNOTs.

\item \textbf{Simultaneous implementation Clifford circuits} Let us assume we want to implement a Pauli network for the ordered set of $l$ Pauli operators $S_1,\dots, S_l$ with only 1-qubit rotations. Simultaneous implementation Clifford circuits construct CER such that \textit{all} Paulis appear on $Z$, $X$, or $Y$ registers of some qubits. Consequently, after applying the Clifford operation, all Pauli can be implemented with just $Z$-, $X$- or $Y$-rotations.  Note that Pauli operations $S_1,\dots, S_l$ need to satisfy special requirements: They need to be partitioned into sets of at most 3 elements, s.t. Paulis from in between the sets commute, and within the sets does not commute. In addition, for 3-element sets $\{S', S'', S'''\}$ a relation $S''' \propto S' \cdot S''$ must hold due to the very definition of CER.

\item \textbf{Compressing Clifford circuits} For initial CER over qubits $q_1,\dots, q_{N}$, we want to compress Paulis $S_1,\dots, S_k$ into qubits $q_2,\dots, q_{N}$. The compressing Clifford circuits transform the initial CER into $\ir$ s.t. for all $i=1,\dots, k$
    \begin{equation}
        S_i = \pm \prod_{q=2}^N \ir_{q,Z} ^{\varepsilon_{q,Z,l}}\ir_{q,X} ^{\varepsilon_{q,X,l}}\ir_{q,Y} ^{\varepsilon_{q,Y,l}}, \qquad l=1,\dots, k,
    \end{equation}
where $\varepsilon_{q,P,l} \in \{0,1\}$ for $P \in \{Z,X,Y\}$ and $\sum_{P\in X,Y,Z}\varepsilon_{q,P,l}\leq 1$. Note that qubit $q_1$ is omitted in the above product.
\end{enumerate}

\section{Multi-Pauli compilation}\label{sec:mp-transpilers}

\subsection{Pauli networks}

In this section, we present a method that utilizes the previously introduced $\ir$ to provide an extension to the state-of-the-art compilation methods. For now, we will assume that our circuit is in the form of a sequence of single-qubit Pauli rotations, i.e. state to synthesize is of the form
\begin{equation}
    \ket{\psi} = \prod_{l=1}^L \exp(i \theta_l S_l),
\end{equation}
where $\theta_l$ is a real parameter and $S_l \in \mathbb P_N$ is a Pauli string. Later, we will show how our results can be applied to general circuits. 

Our method works by taking the first $k$ Pauli strings in the above circuit and applying the Clifford circuit that localizes them in some subset of $n \ll N$ qubits, so-called the \textit{compression component}. Eventually, once the Paulis are localized, they are being implemented with an \textit{implementation component}. Optionally, a \textit{resetting component} is used in which the initial $\ir$ is recovered. The whole process aims at transforming $\ir$ in a way so that $\ir_{q,P} = (-1)^{b_l} S_l$ at some point for some qubit $q$, Pauli $P$ at some point, and $b_l \in \{0,1\}$. Then implementing single-qubit rotation $\exp(i\theta_l(-1)^{b_l}P_q)$ will be  equivalent to implementing $\exp(i \theta_l S_l)$. 

A particular, yet already quite general procedure goes as follows. First, we take the $k$ Pauli strings $\mathcal S = (S_1,\dots,S_k)$ to be implemented from $\ket{\psi}$. Since Paulis in the $Z$- and $X$-registers of $\ir$ are algebraically independent, one can find a decomposition for each of the Paulis $S_1,\dots, S_k$ as follows:
\begin{equation}
    S_l = \pm \prod_{q=1}^N \ir_{q,Z} ^{\varepsilon_{q,Z,l}}\ir_{q,X} ^{\varepsilon_{q,X,l}}\ir_{q,Y} ^{\varepsilon_{q,Y,l}}, \qquad l=1,\dots, k,
\end{equation}
where $\varepsilon_{q,P,l} \in \{0,1\}$ for $P \in \{Z,X,Y\}$ and $\sum_{P\in X,Y,Z}\varepsilon_{q,P,l}\leq 1$. Next, we take a set of qubits $Q$ over which at least one Pauli $S_l\in \mathcal S$ is acting nontrivially, i.e. for which $\sum_{P\in\{X,Y,Z\}}\varepsilon_{q,P,l}=1$ and we take a subgraph of the given quantum hardware graph that contains all the aforementioned nodes. A usual good choice of such a subgraph is the induced subgraph over nodes belonging to the Steiner tree over the nodes $Q$. Once having such a graph, we begin the compression in which we remove information qubit by qubit from the subgraph, eventually decreasing the number of qubits over which Paulis $S_l\in \mathcal S$ are defined. This process for small $k$ can be done efficiently by using the database of compression Clifford circuits. The details on how to construct Clifford databases are described in Appendix~\ref{sec:clifford-database}. The process of compressing the information ends once the subgraph has some pre-chosen number of qubits $n\ll N$. Once this is achieved, implementation components are added so that Paulis $S_l \in \mathcal S$ are being implemented. For this phase, any method that effectively compiles the Pauli networks over a small number of qubits for limited connectivity can be chosen, although, for small $k$ and $n$, we again found out that a pre-computed implementation Clifford database gives the best results. Eventually, an optional resetting phase is implemented and the same procedure is repeated for the next list of Paulis from $\ket{\psi}$. A visualization of this procedure can be found in Fig.~\ref{fig:example}, and the final circuit can be found in Fig~\ref{fig:example-result}.

\begin{figure}
    \centering
    \includegraphics[scale=0.75]{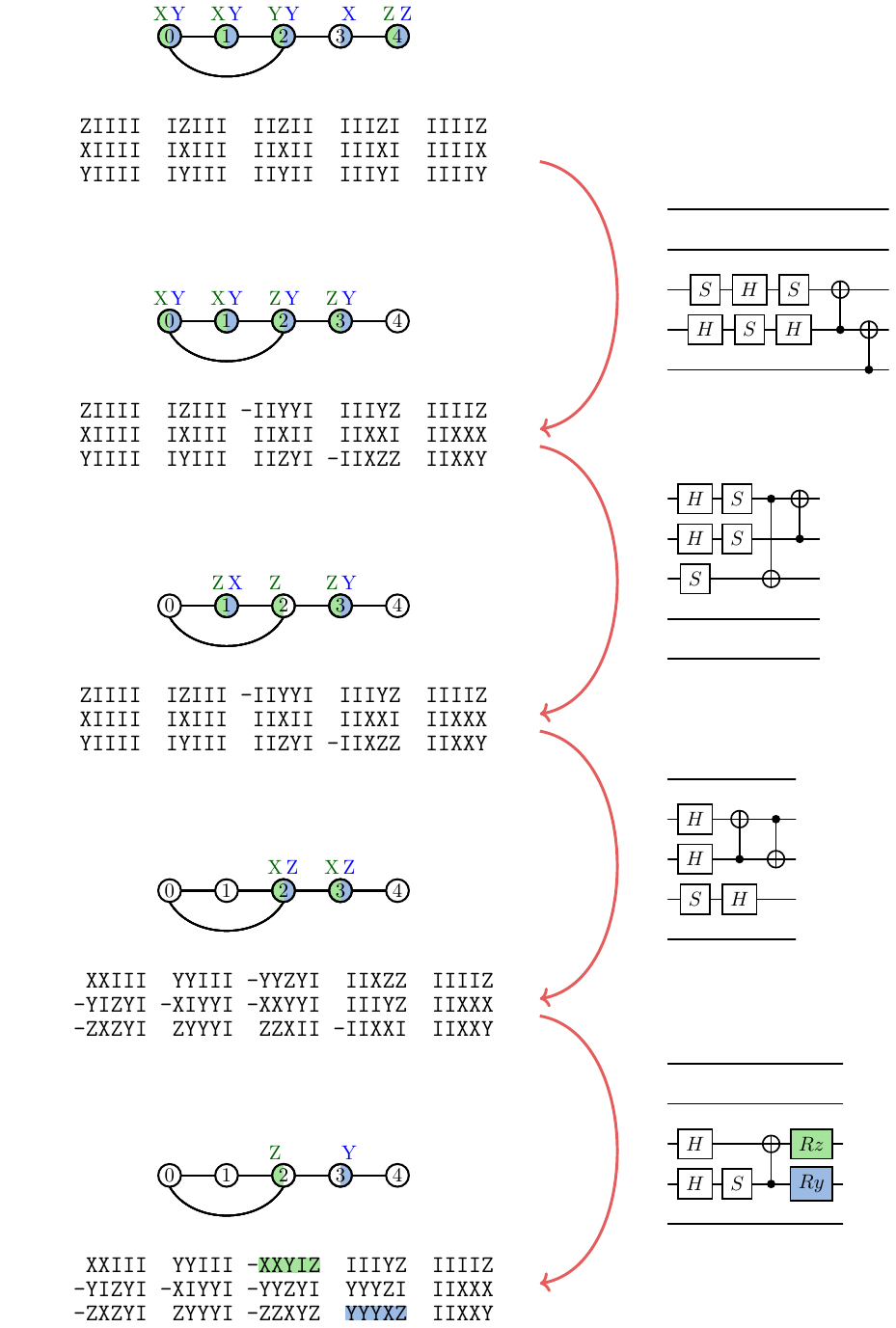}
    \caption{An example of simple routine of compiling $\exp(i\theta_1XXYIZ)\exp(i\theta_1YYYXZ)$. We start with the initial intermediate representation. Via 3 compression circuits, we compress the information to qubits 2 and 3. Eventually, the implementation circuit implements Pauli $XXYIZ$ appears on the $Z$ register of qubit 2, and $YYYXZ$ on the $Y$ register of qubit 3. Not that the selected Clifford operations take into account additional connections that are usually discarded when the Steiner tree is considered. Note that one can use special Clifford operations as described in Sec.~\ref{sec:clifford-circuits}, for example, the second circuit compresses $XXZ$ and $YYY$ (based on in which register the information is stored) from qubits 0, 1, 2 to qubits 1, 2. The last Clifford circuit is an implementation circuit for Paulis $XX$ and $ZZ$.}
    \label{fig:example}
\end{figure}

\begin{figure}
    \centering
    \includegraphics[width=\textwidth]{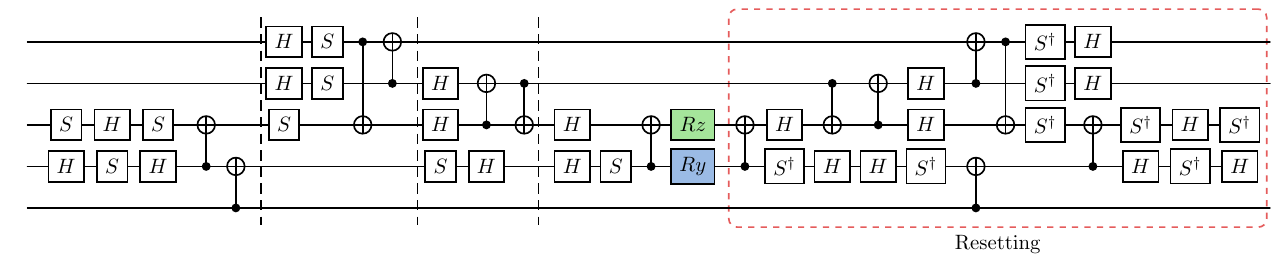}
    \caption{Final circuit from the procedure presented in Fig.~\ref{fig:example}, with an additional resetting component. Horizontal dashed lines separate components generated in the procedure.}
    \label{fig:example-result}
\end{figure}

An interesting question is what is the minimal $n$ to which information can be compressed. Clearly, $n$ cannot be arbitrarily small, as over a subset of qubits $Q$, one can implement at most $2^{2|Q|}-1$ Pauli strings of the form
\begin{equation}
    \prod_{q\in Q} \ir_{q,Z} ^{\varepsilon_{q,Z}}\ir_{q,X} ^{\varepsilon_{q,X}},
\end{equation}
where we assume that at least one $\varepsilon_{q,P}$ is nonzero, as otherwise, we get identity Pauli which affects only the unphysical global phase. However, not only the number of Pauli strings is important, but also their commutativity structure. For example, if we take two anti-commuting Paulis $P, P'$, one can find $\ir$ s.t. for some qubit $q$ we would have e.g. $\ir_{q,Z} = \pm P'$ and $\ir_{q,X} = \pm P''$. In addition, we can `squeeze' even a third Pauli $P' \cdot P''$ from the $Y$ register, effectively allowing compression of 3 Paulis into a single qubit. However, two commuting Paulis cannot be placed in a single qubit for any $\ir$ by the very definition, so at least two qubits are needed to store Paulis in e.g. $\ir_{q,Z}$ and $\ir_{q',Z}$ for $q\neq q'$. For these reasons, and also due to the fact that locally compression components as explained above may encounter commuting Paulis irrespective of the commutation structure of $S_1,\dots,S_k$, we will be always compressing $k$ Paulis into $n=k$ qubits, which is always possible, see Appendix~\ref{sec:compression}.

Let us point out that the compression and implementation components above in fact produce a large family of methods. One can point out the following degrees of freedom among others:
\begin{enumerate}
    \item The compression and implementation circuits might be chosen to optimize CNOT count, depth, or combination of those, which allows fine-tuning Clifford databases towards specific needs.
    
    \item When removing a qubit, the qubit may be chosen randomly, or for instance, the one whose removal results in the smallest number of CNOTs may be chosen.
    
    \item If the next Pauli to be implemented is encountered in the $\ir$ before the compression into $n$ qubits, it can be implemented before the compression is complete. Commutativity among the Pauli strings can be also considered to determine the order of implementation. In this case, a Pauli among the list can be implemented as soon as it is encountered if it commutes with the preceding ones. One can generalize this rule to the case in which the Pauli strings are decomposed into a few registers over a small connected subgraph.
    
    \item One may prefer to implement a larger number of Paulis simultaneously so that each Pauli appears in some register. Although this may require a more complicated implementation circuit, this might be practical for the fault-tolerant era as this will reduce T gate depth, which results only due to Pauli rotations in our case. Note that Pauli operators need to satisfy special criteria specified by the definition of CER to make it possible.
    
    \item Instead of taking the next $k$ Paulis, one can consider a more advanced selection. Instead of fixing the number of the next Paulis to implement, one can consider implementing $i$ Paulis for $i = 1,\dots,k$ and check the total CNOT count w.r.t. to the number of Paulis implemented for each case. Note that $i=k$ is the case described above. One can choose $i'$ that minimizes the CNOT count and implement $i'$ next Pauli strings in the sequence. One can additionally take into account commutativity when selecting the next Paulis. 
    \item To boost compression, one can relax the condition of taking $k$ Paulis to take $k' \geq k$ Paulis s.t. that all $k'$ Paulis belongs to the space spanned by some $k$ Paulis. Note that for compression we only need to take into account those $k$ Paulis, and the remaining $k'-k$ will be compressed alongside.
\end{enumerate}
We would like to emphasize that the above list does not include all the possible degrees of freedom and other possible advancements can be considered. 

While the order of implementation and compression components can be very complex, we found the particular workflow described below particularly useful. First, we specify the list of components that are supposed to implement a list of Paulis of length at most $k'$. Such components will be executed in a loop until all the Paulis are implemented. Then, a list of consecutive Paulis $S_1,\dots,S_{k'}$ is selected so that its rank is at most preselected number $k\leq k'$. Then for the sublists $P_1,\dots, P_i$ for $i=1,\dots, k'$ we repeat inside the loop the previously defined list of components. After we get the circuits combining compression, implementation, and resetting (if applied) for all the sublists of size $i$, we choose the one for which the number of CNOTs over $i$ is the smallest.

Right after implementing the compression and implementation components, it is likely that we end up with an $\ir$ different than the initial one that corresponds to the empty circuit. Two different approaches from the literature can be considered. The first one is to implement the resetting right after implementing each sublist. In the second, resetting is done only once all the Paulis are implemented. Resetting $\ir$ can be done by taking all so far-produced gates (that were not reset before), removing the non-Clifford ones responsible for implementing Paulis, and applying the inverse of them at the end of the circuit. Such circuits can be optimized using Clifford circuit resynthesis methods like the one proposed in \cite{winderl2023recursively} or using the methods that will be presented in the next section. Note that in the case where resetting is done after each sublist generation, not much improvement may be expected for a large number of qubits, as this would mean that the compression and implementation phases themselves are not done efficiently. However, if resetting was not done for the many lists of Paulis, a much larger improvement is expected. Alternatively, if an expectation value of some Hamiltonian will be calculated with respect to the prepared state, the Clifford part can be merged to the Hamiltonian as already suggested in \cite{liu2024quclear}.

Using the multi-Pauli compilation framework described above, we obtain a compilation algorithm which we name Multi-Pauli Lazy Synthesis (MPLS). We will compare it against lazy synthesis~\cite{martiel2022architecture} (LS) and Steiner synthesis (SS), an adaptation of phase Pauli synthesis using Steiner trees to general Pauli strings as explained in~\cite{miller2024treespilation}. For LS and MPLS, resetting is executed at the end of the circuit only. We considered circuits with varying numbers of Paulis and qubits, where Paulis correspond to products of 2 or 4 Majorana operators using the Jordan-Wigner (JW)~\cite{jordan1993paulische} and Bravyi-Kitaev (BK)~\cite{bravyi2002fermionic} mappings. The choice of Pauli strings is dictated by the ADAPT-VQE procedure as explained in~\cite{miller2024treespilation}. We compiled the circuit for the path graph, with layout chosen for each node of respective PPTT ternary tree~\cite{miller2024treespilation} being the same for qubit and mode, i.e. if the node in a tree was assigned to $i$-th mode, it is also assigned to $i$-th qubit. The results are presented in Figs.~\ref{fig:per-paulis} and \ref{fig:per-qubits}, and technical details can be found in Appendix~\ref{sec:numerics-general}. We can see that our method consistently outperforms the other two methods for both mappings. The improvement over LS remains when we consider both methods without the resetting part as if it would be `merged' into a Hamiltonian. 

\begin{figure}[t]
    \centering
    \includegraphics[scale=0.8]{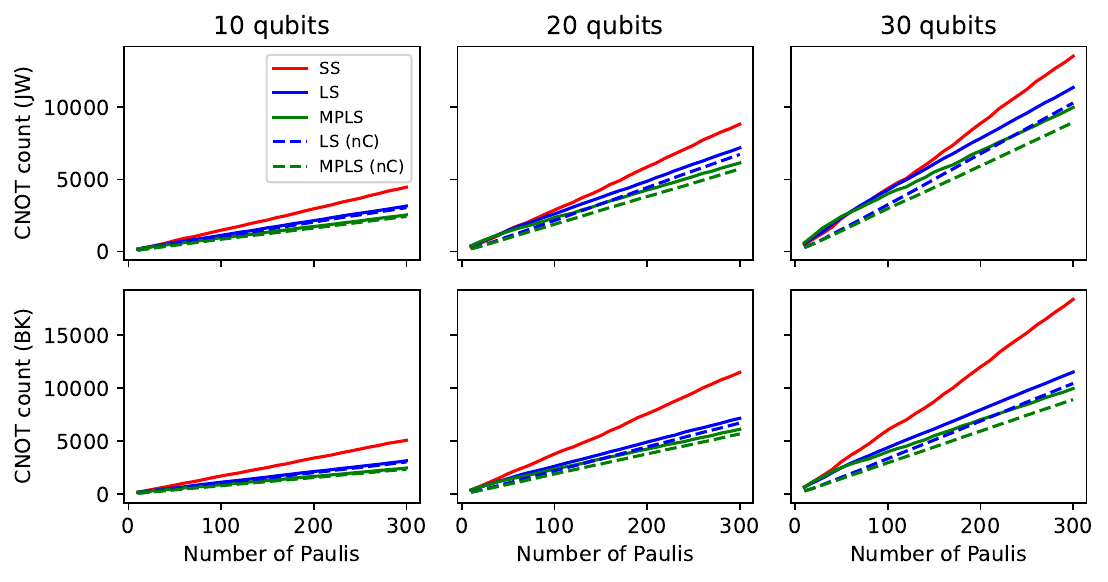}
        \caption{CNOT counts for various numbers of qubits and an increasing number of Pauli exponentiations, obtained by lazy synthesis (LS), Steiner synthesis (SS), and Multi-Pauli Lazy Synthesis (MPLS) methods. JW mapping is used for the top plots and BK mapping is used for the bottom plots. For LS and MPLS, CNOT counts without the final Clifford circuit are also presented.}
    \label{fig:per-paulis}
\end{figure}
\begin{figure}[t]
    \centering
    \includegraphics[scale=0.8]{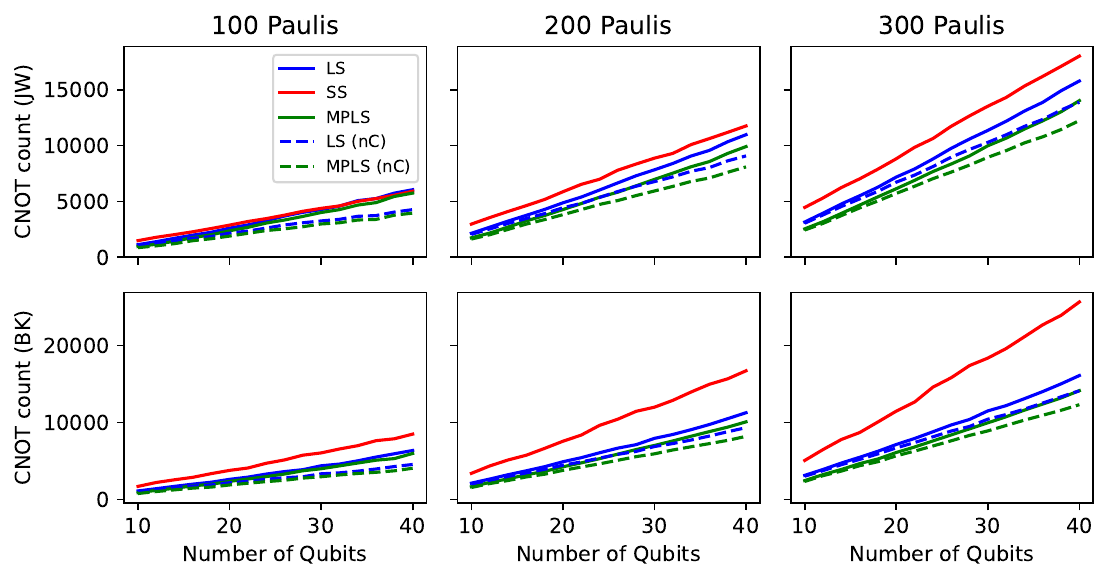}
    \caption{CNOT counts for various numbers of Pauli exponentiations and an increasing number of qubits, obtained by lazy synthesis (LS), Steiner synthesis (SS), and Multi-Pauli Lazy Synthesis (MPLS) methods. JW mapping is used for the top plots and BK mapping is used for the bottom plots. For LS and MPLS, CNOT counts without the final Clifford circuit are also presented.}
    \label{fig:per-qubits}
\end{figure}

When the resetting is left to the end of the circuit, the intermediate representation eventually becomes more and more complicated. Starting from an arbitrary complicated $\ir$, information from $\Theta(n)$ qubits is required to be able to implement most of the next Paulis. While this makes the method still efficient for the architectures in which the average distance between the nodes is $\Theta(n)$ like a path graph, or for the Pauli strings which are on average already $\Theta(n)$-local, recent studies on the fermion-to-qubit mappings show that significantly smaller locality can be guaranteed for the ansatze consisting of Pauli operators being low-order products of Majorana strings. In particular, for heavy-hexagonal architectures or 2D grids, a properly chosen mapping can guarantee a locality of at most $\order{\sqrt n }$. This means that the proposed method will not be efficient, as the Steiner synthesis already guarantees $\order{M\sqrt n}$ 2-qubit gates for $M$ Pauli strings, while the proposed method will likely require $\order{Mn}$. For such cases, it is recommended to reset after each of the Paulis from a list is implemented. 

To demonstrate that our method is beneficial also with immediate resetting, we considered fermionic spin operators from the original ADAPT-VQE~\cite{grimsley2019adaptive}, where the generators of double excitations take the form
\begin{equation} 
\begin{split}
\quad a_i^\dagger a_j^\dagger a_ka_l - a_k^\dagger a_l^\dagger a_ia_j &= -\frac{i}{8}(m_im_jm_k\bar m_l + m_im_j \bar m_k m_l - m_i \bar m_j  m_k m_l - \bar m_im_j  m_k m_l \\
&\phantom{=\ } - \bar m_i \bar m_j \bar m_k m_l - \bar m_i \bar m_j  m_k  \bar m_l + \bar m_im_j \bar m_k \bar m_l + m_i \bar m_j \bar m_k  \bar m_l),
\end{split}
\end{equation}
for some  pairwise different mode indices $i,j,k,l$, where we use the standard equivalence $m_j = a_j^\dagger + a_j$ and $\bar m_j = i(a_j^\dagger-a_j)$. As one can see, the first four elements are enough to produce all the terms in the Majoranic polynomial, e.g.
\begin{equation}
    \bar m_i \bar m_j \bar m_k m_l \propto m_im_jm_k\bar m_l \times m_i \bar m_j  m_k m_l \times m_im_j  m_k m_l,
\end{equation}
which similarly follows for other Majoranic monomials from the second row. Hence it is enough to compress these 4 Majorana products into 4 qubits, and then implement a 4-qubit Clifford circuit which produces a sequence of intermediate representations such that every term appears at least once. Note that since all the terms commute, the order can be arbitrary. We benchmark LS and SS against the Multi-Pauli Resetted (MPR) variant for random circuits with a single double-excitation. For the layout and the mapping, we take a random PPTT with the corresponding ternary tree with the minimum height from a heavy-hexagonal architecture. The results are presented in Fig.~\ref{fig:mpp-4}, and technical details can be found in Appendix~\ref{sec:numerics-de}. As we can see, our method clearly outperforms the SS, and also provides better circuits than LS in many cases, which suggests that the combination of MPR and LS might be preferable in practical scenarios.

\begin{figure}[t]
    \centering
    \includegraphics[scale=0.8]{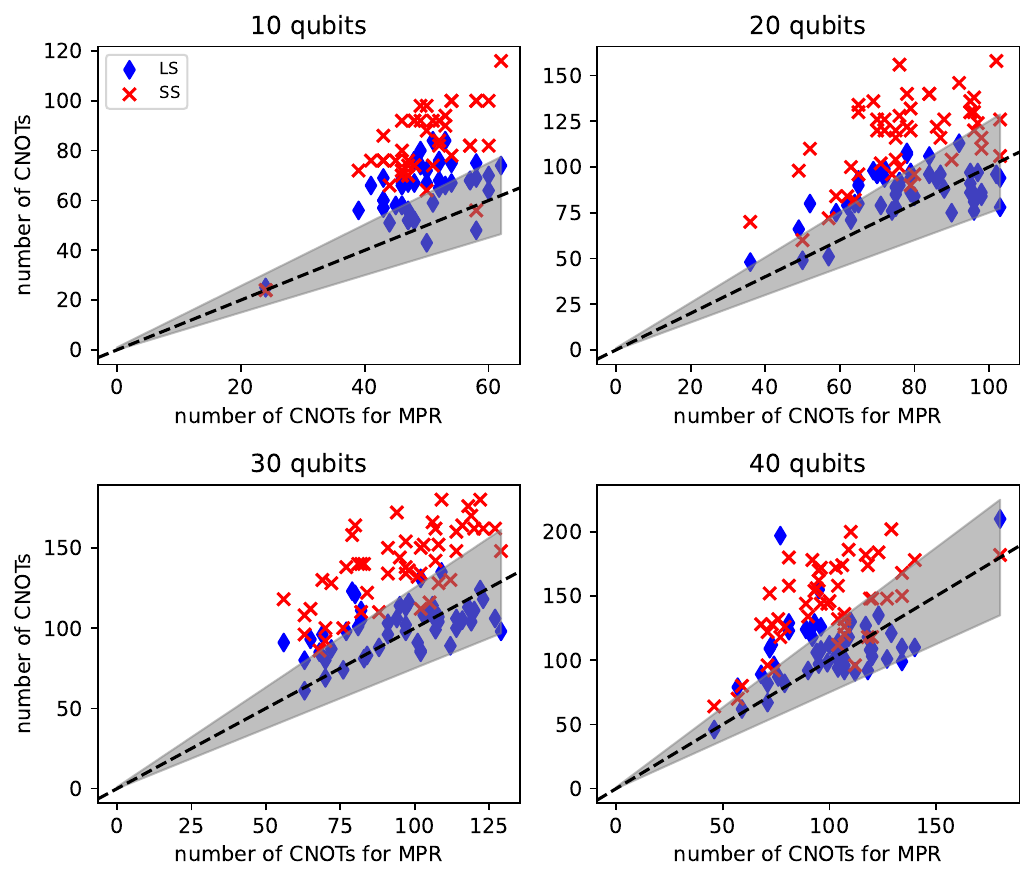}
    \caption{CNOT counts for random circuits that consist of a single double excitation. The dashed line depicts the line $x=y$. The gray area spans the space between $y=0.75x$ and $y=1.25x$.}
    \label{fig:mpp-4}
\end{figure}

While the proposed compilation method originally targets Pauli networks, its applicability goes beyond these types of circuits. This is due to the fact that most of the circuits can be easily converted into a Pauli network followed by a Clifford circuit. In particular, if we are given a quantum circuit consisting only of Clifford+$T$ gates, the first $T$ gates can be converted into $Z$-rotations. Then, if we have a sequence of Clifford gates and Pauli rotations, one can transform them easily into a sequence of Pauli network followed by a single large Clifford circuit that could be resynthesized~\cite{Litinski2019magicstate}.

\subsection{Multi-Pauli Clifford Synthesis}\label{sec:mlcs}

The compilation procedure described in the previous subsection for Pauli networks can be transformed into a Clifford synthesis method. Suppose we are given a Clifford operation $C$. In the scope of this paper, the task of Clifford synthesis is to create a quantum circuit obeying limited connectivity constraints that implements $C$. We would like to point out that there exists methods of Clifford synthesis directly designed for linear nearest neighbor architecture~\cite{bravyi2021hadamard} with appealing $\order{n}$ depth, however, its generalization to arbitrary connectivity is not known. Clifford synthesis algorithm from~\cite{winderl2023architecture} steps forward as an algorithm designed for limited architecture. Recently another method was proposed that was synthesizing Clifford circuits in the pairs of $Z_q,X_q$ simultaneously (qubit by qubit)~\cite{dilkes2024greedy}, instead of first $Z_q$ and then $X_q$ (Pauli by Pauli). This is in the spirit of the general compilation method presented in the previous subsection, however,  the synthesis was designed by a list of non-straightforward rules on how to `compress' information from $Z_i$ and $X_i$ into a qubit.

In this paper, we propose a generalization of the methods presented. In our case, similarly, as it was done in~\cite{dilkes2024greedy} the concept is to compress operators $Z_i, X_i$ for a few $i=i_1,\dots,i_k$ ($k=1$ in the original paper). However, instead of relying on the rules specified a priori, we will rely on the Clifford databases used also in the previous step to compress information into a subset of physical qubits, and a particular compression-implementation workflow explained later. Note that while the information has to be compressed to $k$ qubits, the logical input qubits and output qubits might differ, in the sense that the Clifford operation will be synthesized up to a permutation of the logical qubits embedding.

Let us now describe the exact procedure. We start with generating $\ir$ for a given Clifford operation $C$. This can be easily done by first taking an arbitrary Clifford synthesis method (not necessarily for limited connectivity) and constructing $\ir$ using the rules presented in Table~\ref{tab:ir_atomic_gates}. We select a set of \textit{logical} qubits $i=1,\dots, i_k$ and physical qubits $q_{j_1},\dots,q_{j_k}$ that form a connected subgraph and do not disconnect the graph. Then we follow the compression procedure so that the Pauli list $\{Z_{i_1}, \dots, Z_{i_k}, X_{i_1}, \dots, X_{i_k}\}$ will be compressed onto qubits $q_1,\dots, q_k$. Here, we can use any of the strategies presented in the previous section. Once compression is done, an implementation component is executed that generates 1-local $Z$ and $X$ Paulis from the list in the respective physical qubits. Note that in this case simultaneous implementation for all $Z_i, X_i$ is required. Eventually, the qubits $q_1,\dots,q_k$ are removed from the graph, and the procedure repeats for new indices $i$. Note that one can implement a version in which the same logical qubit $i$ is always `uncomputed' to the physical qubit $q_i$ (assuming before the Clifford operation we have trivial embedding $i\mapsto q_i$), however, in this case, it is natural to first specify the subgraph induced by $q_{j_1},\dots, q_{j_k}$ and the do the compression for Pauli list $\{Z_{j_1}, \dots, Z_{j_k}, X_{j_1}, \dots, X_{j_k}\}$. The ambiguity of specifying the final layout of logical to physical qubits might be useful for lazy synthesis methods, in the case Clifford operations are implemented right before the measurement.

It is important to point out noticeable differences between the above procedure and the procedure from~\cite{dilkes2024greedy}. First, our methods directly generalized the procedure into the larger number of qubits being compressed, e.g. $k>1$. Second, our procedure possesses more flexibility in the direction towards the selection of the Clifford databases, or how to perform the compression scheme. As an example, Clifford databases optimized towards depth minimization can be chosen. Finally, the rules presented in~\cite{dilkes2024greedy} were naturally designed for Steiner trees; in our case, the Clifford database could utilize reacher connectivity that may appear in the subgraph during the compression.     

In Fig.~\ref{fig:clifford}, we present benchmarks for our exact and unordered (up to SWAPs on the right side) Multi-Pauli Clifford Synthesis (MPCS) with 2 Pauli compression at a time, against the method from~\cite{winderl2023architecture} which we will refer to as PauliOpt, and an improved unordered version of PauliOpt in which similarly as in the method above logical qubits can be reset to different qubits. We considered Clifford circuits with varying numbers of qubits and gates, technical details for circuit generation to be found in Appendix~\ref{sec:numerics-clifford}. For the underlying architecture, we use a path graph. We can see that our methods consistently improve over the existing state-of-the-art methods. Because of the lack of implementation of the method presented in~\cite{dilkes2024greedy}, we could not benchmark against it, but we expect the numbers to be comparable to ours, as one can consider this method to be a special version of the proposed one.

\begin{figure}[h]
        \centering
        \includegraphics[scale=0.8]{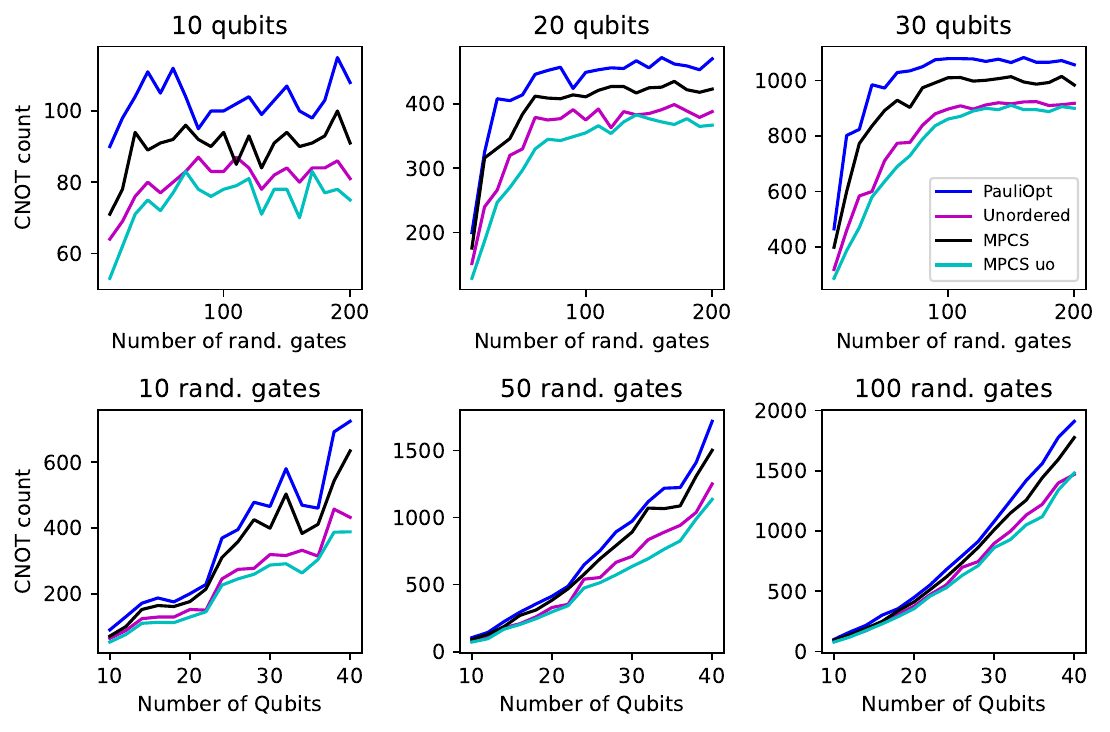}
        \caption{CNOT counts for different Clifford synthesis methods for increasing number of gates and fixed number of qubits (top) and for a fixed number of gates and increasing number of qubits (bottom). `MPCS' and `MPCS uo' stand for exact and unordered Multi-Pauli Clifford Synthesis respectively. `Unordered' stands for the unordered version of the PauliOpt algorithm~\cite{winderl2023architecture}.}
        \label{fig:clifford}
    \end{figure}

\subsection{Benchmarks for VQE-generated ansatze}

For simulating Fermionic systems, as in the case of VQE, it is crucial to map the problem to the qubit space by using a Fermion-to-qubit (F2Q) mapping. In \cite{miller2024treespilation}, authors present a technique to reduce CNOT count on limited connectivity by optimizing over a family of tree-based mappings. During the optimization process, a cost function is chosen to reflect the CNOT count of the resulting circuit. Once the F2Q mapping is chosen, ansatz is compiled using a particular compilation method in accordance with the cost function chosen during optimization. Using our compilation method as a subroutine, we compared treespilation with LS and SS vs treespilation with MPLS as described in previous section for ground state ansatze prepared by ADAPT-VQE algorithm using the majoranic pool for BODIPY-4 molecule as defined in~\cite{bodipy}, H$_6$, LiH, N$_2$, BeH$_2$ molecules as considered in~\cite{miller2024treespilation} on the heavy-hexagonal architecture of IBM Eagle. We used the PauliOpt algorithm and multi-Pauli Clifford synthesis algorithms for LS and MPLS, respectively, to resynthesize the Clifford circuit at the end of the ansatz. The results are presented in Table~\ref{tab:treespilation}, technical details to be found in Appendix~\ref{sec:numerics-treespilation}. It is clearly seen that treespilation when combined with MPLS outperforms treespilation with SS and LS. 

\begin{table}[h]
    \centering
    \begin{tabular}{lccc}
    \toprule
         Molecule&  SS&  LS& MPLS\\
         \midrule
         H$_6$ (12)&  2888&  2444& 1869\\
 LiH (12)& 110& 113&100\\
 N$_2$ (12)& 372& 395&295\\
 BeH$_2$ (14)& 866& 919&722\\
 BODIPY-4 (16)& 2140& 2246&1845\\
 \bottomrule
    \end{tabular}
    \caption{CNOT counts of ground state ansatze prepared using ADAPT-VQE majoranic pool compiled for IBM Eagle device using treespilation \cite{miller2024treespilation} combined with different compilation methods. The number inside the parenthesis next to each molecule is the number of qubits. }
    \label{tab:treespilation}
\end{table}

\section{Discussion} \label{sec:discussion}

Given a quantum state in the form of a sequence of  Pauli strings, its efficient compilation is crucial, in particular for architectures with limited connectivity. There have been various methods proposed in the literature, making use of concepts like phase polynomials, parity networks, and Clifford tableaus as mentioned in the Introduction, which represent the current `status' of the quantum state. In this paper, we pursued a similar approach and introduced Clifford Executive Representation, a convenient generalization of tableau representation presented in~\cite{martiel2022architecture}, to serve as our intermediate representation when inspecting such quantum circuits. Employing this representation, we presented a compilation algorithm that implements multiple (not necessarily commuting) Pauli strings at once. 

Our compilation method comprises a compression component that lies at the heart of our synthesis technique. With appropriate compression, the information initially scattered around multiple qubits is localized into a smaller subset. Compression in our case is performed by removing information qubit-by-qubit, using connections beyond Steiner trees usually considered in the literature. The core problem of giving an effective compression is to find an efficient Clifford circuit that removes information from a qubit. We did this by pre-computing databases of Clifford circuits that serve the purpose. The requirements for Clifford databases can be easily explained using CER, which gives yet another application of the representation. While our Clifford database was optimized towards minizing the number of CNOTs, there is flexibility on what an efficient Clifford operation is and one can consider other metrics, e.g. depth of the circuit.

We benchmarked our algorithms against random ansatze, but also ansatze coming from an ADAPT-VQE procedure for molecular Hamiltonians. In all the cases considered, we found out that our method outperforms the state-of-the-art methods designed for limited connectivity. Our method also performs well for various F2Q mappings and outperforms Steiner Synthesis~\cite{miller2024treespilation} and Lazy synthesis methods~\cite{martiel2022architecture} also when combined with treespilation.
Furthermore, we proposed a new Clifford resynthesis method for limited connectivity based on multi-Pauli lazy synthesis, which outperforms the state-of-the-art methods.

Our work can be improved in various directions. In particular, we believe it is worth exploring how the method improves with a larger number of Paulis in the sublist. We were able to reach $k=4$ which was still providing some improvement, however, eventually we found difficulty with generating Clifford databases for $k>4$. Effective pre-computing, or constructing Clifford operations for compression or implementation for many qubits and/or Paulis is another bottleneck of our algorithm. Finally, while our numerical results considered only Pauli lists consisting of consecutive Paulis in the ansatz, we believe more advanced Pauli list selection could be done. 

\paragraph{Data availability}
The data that support the findings of this article are available from the corresponding author upon reasonable request. 

\paragraph{Code availability }
The code that was used to obtain the data that supports the findings in this article cannot be made publicly available as it is proprietary and part of Algorithmiq's Aurora platform.

\paragraph{Acknowledgements}
We thank Zolt\'an Zimbor\'as for helpful discussions. Work on “Quantum Computing for Photon-Drug Interactions in Cancer Prevention and Treatment” is supported by Wellcome Leap as part of the Q4Bio Program.

\paragraph{Author contributions}
A.G. proposed the research idea. Both authors contributed to every other aspect of the research and manuscript preparation.

\bibliographystyle{ieeetr}
\bibliography{sample}

\begin{thebibliography}{10}

\bibitem{preskill2018quantum}
J.~Preskill, ``Quantum computing in the {NISQ} era and beyond,'' {\em Quantum},
  vol.~2, p.~79, 2018.

\bibitem{feynman2018simulating}
R.~P. Feynman, ``Simulating physics with computers,'' in {\em Feynman and
  computation}, pp.~133--153, cRc Press, 2018.

\bibitem{cao2018potential}
Y.~Cao, J.~Romero, and A.~Aspuru-Guzik, ``Potential of quantum computing for
  drug discovery,'' {\em IBM Journal of Research and Development}, vol.~62,
  no.~6, pp.~6--1, 2018.

\bibitem{maniscalco2022quantum}
S.~Maniscalco, E.-M. Borrelli, D.~Cavalcanti, C.~Foti, A.~Glos, M.~Goldsmith,
  S.~Knecht, K.~Korhonen, J.~Malmi, A.~Nyk{\"a}nen, {\em et~al.}, ``Quantum
  network medicine: rethinking medicine with network science and quantum
  algorithms,'' {\em arXiv preprint arXiv:2206.12405}, 2022.

\bibitem{bauer2020quantum}
B.~Bauer, S.~Bravyi, M.~Motta, and G.~K.-L. Chan, ``Quantum algorithms for
  quantum chemistry and quantum materials science,'' {\em Chemical Reviews},
  vol.~120, no.~22, pp.~12685--12717, 2020.

\bibitem{abrams1999quantum}
D.~S. Abrams and S.~Lloyd, ``Quantum algorithm providing exponential speed
  increase for finding eigenvalues and eigenvectors,'' {\em Physical Review
  Letters}, vol.~83, no.~24, p.~5162, 1999.

\bibitem{peruzzo2014variational}
A.~Peruzzo, J.~McClean, P.~Shadbolt, M.-H. Yung, X.-Q. Zhou, P.~J. Love,
  A.~Aspuru-Guzik, and J.~L. O’brien, ``A variational eigenvalue solver on a
  photonic quantum processor,'' {\em Nature Communications}, vol.~5, no.~1,
  p.~4213, 2014.

\bibitem{romero2018strategies}
J.~Romero, R.~Babbush, J.~R. McClean, C.~Hempel, P.~J. Love, and
  A.~Aspuru-Guzik, ``Strategies for quantum computing molecular energies using
  the unitary coupled cluster ansatz,'' {\em Quantum Science and Technology},
  vol.~4, no.~1, p.~014008, 2018.

\bibitem{kandala2017hardware}
A.~Kandala, A.~Mezzacapo, K.~Temme, M.~Takita, M.~Brink, J.~M. Chow, and J.~M.
  Gambetta, ``Hardware-efficient variational quantum eigensolver for small
  molecules and quantum magnets,'' {\em nature}, vol.~549, no.~7671,
  pp.~242--246, 2017.

\bibitem{grimsley2019adaptive}
H.~R. Grimsley, S.~E. Economou, E.~Barnes, and N.~J. Mayhall, ``An adaptive
  variational algorithm for exact molecular simulations on a quantum
  computer,'' {\em Nature communications}, vol.~10, no.~1, p.~3007, 2019.

\bibitem{tang2021qubit}
H.~L. Tang, V.~Shkolnikov, G.~S. Barron, H.~R. Grimsley, N.~J. Mayhall,
  E.~Barnes, and S.~E. Economou, ``qubit-adapt-vqe: An adaptive algorithm for
  constructing hardware-efficient ans{\"a}tze on a quantum processor,'' {\em
  PRX Quantum}, vol.~2, no.~2, p.~020310, 2021.

\bibitem{yordanov2021qubit}
Y.~S. Yordanov, V.~Armaos, C.~H. Barnes, and D.~R. Arvidsson-Shukur,
  ``Qubit-{E}xcitation-{B}ased adaptive variational quantum eigensolver,'' {\em
  Communications Physics}, vol.~4, no.~1, p.~228, 2021.

\bibitem{ramoa2024reducing}
M.~Ram{\^o}a, P.~G. Anastasiou, L.~P. Santos, N.~J. Mayhall, E.~Barnes, and
  S.~E. Economou, ``Reducing the resources required by {ADAPT-VQE} using
  coupled exchange operators and improved subroutines,'' {\em arXiv preprint
  arXiv:2407.08696}, 2024.

\bibitem{anastasiou2024tetris}
P.~G. Anastasiou, Y.~Chen, N.~J. Mayhall, E.~Barnes, and S.~E. Economou,
  ``Tetris-adapt-vqe: An adaptive algorithm that yields shallower, denser
  circuit ans{\"a}tze,'' {\em Physical Review Research}, vol.~6, no.~1,
  p.~013254, 2024.

\bibitem{nykanen2022mitigating}
A.~Nyk{\"a}nen, M.~A. Rossi, E.-M. Borrelli, S.~Maniscalco, and
  G.~Garc{\'\i}a-P{\'e}rez, ``Mitigating the measurement overhead of adapt-vqe
  with optimised informationally complete generalised measurements,'' {\em
  arXiv preprint arXiv:2212.09719}, 2022.

\bibitem{li2022paulihedral}
G.~Li, A.~Wu, Y.~Shi, A.~Javadi-Abhari, Y.~Ding, and Y.~Xie, ``Paulihedral: a
  generalized block-wise compiler optimization framework for quantum simulation
  kernels,'' in {\em Proceedings of the 27th ACM International Conference on
  Architectural Support for Programming Languages and Operating Systems},
  pp.~554--569, 2022.

\bibitem{jin2024tetris}
Y.~Jin, Z.~Li, F.~Hua, T.~Hao, H.~Zhou, Y.~Huang, and E.~Z. Zhang, ``Tetris: A
  compilation framework for {VQA} applications in quantum computing,'' in {\em
  2024 ACM/IEEE 51st Annual International Symposium on Computer Architecture
  (ISCA)}, pp.~277--292, IEEE, 2024.

\bibitem{van2020circuit}
E.~Van Den~Berg and K.~Temme, ``Circuit optimization of {H}amiltonian
  simulation by simultaneous diagonalization of {P}auli clusters,'' {\em
  Quantum}, vol.~4, p.~322, 2020.

\bibitem{cowtan2020generic}
A.~Cowtan, W.~Simmons, and R.~Duncan, ``A generic compilation strategy for the
  unitary coupled cluster ansatz,'' {\em arXiv preprint arXiv:2007.10515},
  2020.

\bibitem{meijer2024advances}
A.~Meijer-van~de Griend, {\em Advances in Quantum Compilation in the {NISQ}
  Era}.
\newblock Doctoral dissertation, University of Helsinki, 2024.

\bibitem{de2020architecture}
A.~Meijer-van~de Griend and R.~Duncan, ``Architecture-aware synthesis of phase
  polynomials for {NISQ} devices,'' {\em arXiv preprint arXiv:2004.06052},
  2020.

\bibitem{vandaele2022phase}
V.~Vandaele, S.~Martiel, and T.~G. de~Brugi{\`e}re, ``Phase polynomials
  synthesis algorithms for {NISQ} architectures and beyond,'' {\em Quantum
  Science and Technology}, vol.~7, no.~4, p.~045027, 2022.

\bibitem{amy2017cnot}
M.~Amy, P.~Azimzadeh, and M.~Mosca, ``On the {{CNOT}}-complexity of
  {{CNOT}}-phase circuits,'' {\em arXiv preprint arXiv:1712.01859}, 2017.

\bibitem{nash2020quantum}
B.~Nash, V.~Gheorghiu, and M.~Mosca, ``Quantum circuit optimizations for {NISQ}
  architectures,'' {\em Quantum Science and Technology}, vol.~5, no.~2,
  p.~025010, 2020.

\bibitem{kissinger2019cnot}
A.~Kissinger and A.~Meijer-van~de Griend, ``{CNOT} circuit extraction for
  topologically-constrained quantum memories,'' {\em arXiv preprint
  arXiv:1904.00633}, 2019.

\bibitem{meijer2023towards}
A.~M. van~de Griend, ``Towards a generic compilation approach for quantum
  circuits through resynthesis,'' {\em arXiv preprint arXiv:2304.08814}, 2023.

\bibitem{gogioso2022annealing}
S.~Gogioso and R.~Yeung, ``Annealing optimisation of mixed {ZX} phase
  circuits,'' {\em arXiv preprint arXiv:2206.11839}, 2022.

\bibitem{winderl2023recursively}
D.~Winderl, Q.~Huang, and C.~B. Mendl, ``A recursively partitioned approach to
  architecture-aware {ZX} polynomial synthesis and optimization,'' in {\em 2023
  IEEE International Conference on Quantum Computing and Engineering (QCE)},
  vol.~1, pp.~837--847, IEEE, 2023.

\bibitem{de2024faster}
T.~G. de~Brugi{\`e}re and S.~Martiel, ``Faster and shorter synthesis of
  hamiltonian simulation circuits,'' {\em arXiv preprint arXiv:2404.03280},
  2024.

\bibitem{liu2024quclear}
J.~Liu, A.~Gonzales, B.~Huang, Z.~H. Saleem, and P.~Hovland, ``Quclear:
  {C}lifford extraction and absorption for significant reduction in quantum
  circuit size,'' {\em arXiv preprint arXiv:2408.13316}, 2024.

\bibitem{dilkes2024greedy}
S.~Dilkes and Y.~Tang, ``A greedy search algorithm for the construction of
  architecture-aware {P}auli-exponential-{C}lifford circuits,'' 2024.
\newblock available at
  \url{https://quantum-compilers.github.io/iwqc2024/accepted.html}.

\bibitem{martiel2022architecture}
S.~Martiel and T.~G. de~Brugi{\`e}re, ``Architecture aware compilation of
  quantum circuits via lazy synthesis,'' {\em Quantum}, vol.~6, p.~729, 2022.

\bibitem{winderl2023architecture}
D.~Winderl, Q.~Huang, A.~M.-v. de~Griend, and R.~Yeung, ``Architecture-aware
  synthesis of stabilizer circuits from {C}lifford tableaus,'' {\em arXiv
  preprint arXiv:2309.08972}, 2023.

\bibitem{aaronson2004improved}
S.~Aaronson and D.~Gottesman, ``Improved simulation of stabilizer circuits,''
  {\em Physical Review A—Atomic, Molecular, and Optical Physics}, vol.~70,
  no.~5, p.~052328, 2004.

\bibitem{miller2024treespilation}
A.~Miller, A.~Glos, and Z.~Zimbor{\'a}s, ``Treespilation: Architecture- and
  state-optimised fermion-to-qubit mappings,'' {\em arXiv preprint
  arXiv:2403.03992}, 2024.

\bibitem{jordan1993paulische}
P.~Jordan and E.~P. Wigner, {\em {\"U}ber das paulische {\"a}quivalenzverbot}.
\newblock Springer, 1993.

\bibitem{bravyi2002fermionic}
S.~B. Bravyi and A.~Y. Kitaev, ``Fermionic quantum computation,'' {\em Annals
  of Physics}, vol.~298, no.~1, pp.~210--226, 2002.

\bibitem{Litinski2019magicstate}
D.~Litinski, ``Magic {S}tate {D}istillation: {N}ot as {C}ostly as {Y}ou
  {T}hink,'' {\em {Quantum}}, vol.~3, p.~205, Dec. 2019.

\bibitem{bravyi2021hadamard}
S.~Bravyi and D.~Maslov, ``Hadamard-free circuits expose the structure of the
  {C}lifford group,'' {\em IEEE Transactions on Information Theory}, vol.~67,
  no.~7, pp.~4546--4563, 2021.

\bibitem{bodipy}
A.~Fitzpatrick, L.~Thiessen, W.~N. Talarico, {\"O}.~Salehi, A.~Nyk{\"a}nen,
  K.~Korhonen, A.~Miller, H.~Vappula, E.-M. Borrelli, D.~Cavalcanti, M.~A.~C.
  Rossi, G.~García-Pérez, S.~Maniscalco, V.~Krishna, , A.~Glos,
  F.~Pavo\v{s}evi\'{c}, and S.~Knecht, ``Excitation energies of a
  {T}hiophene-fused boron dipyrromethene ({BODIPY}) fluorophore simulated on a
  superconducting quantum computer,'' 2024.
\newblock In preparation.

\bibitem{mehlhorn1988faster}
K.~Mehlhorn, ``A faster approximation algorithm for the {S}teiner problem in
  graphs,'' {\em Information Processing Letters}, vol.~27, no.~3, pp.~125--128,
  1988.

\bibitem{paulioptcode}
``Pauliopt.''
\newblock \url{https://github.com/hashberg-io/pauliopt}.

\end{thebibliography}

\appendix

\section{Clifford databases} \label{sec:clifford-database}

Our method heavily relies on well-adjusted Clifford databases that locally transform the intermediate representation. The necessary steps for creating such a Clifford database comprise of:
\begin{enumerate}
    \item Formally specifying the requirements of the Clifford operation. In our case, we only use compressing and implementing Cliffords from~\ref{sec:clifford-circuits} 
    \item Formally specifying the constraints of the Clifford operation. This may in particular include information about the pairs of qubits that can be interacted with CNOTs.
    \item In case there is more than one Clifford circuit obeying the specified requirements and constraints, one may optionally add a cost function that assesses the `goodness' of the circuit. In our case, we focus on finding Clifford circuits that minimize the number of CNOT gates.
\end{enumerate}
Our Clifford databases are constructed heuristically as follows. First, we specify a maximum number of CNOTs $K$ -- this is adapted heuristically to make sure that some Clifford circuit of interest is generated. Then starting with an empty circuit, we randomly select two qubits that can be interacted with CNOTs. On both qubits, we apply a random 1-qubit Clifford gate, and then we apply a CNOT gate afterward. This procedure is repeated until $K$ CNOTs are added, and each circuit with $0,\dots,K$ CNOTs is verified to check if it satisfies the necessary requirements and conditions. Such Clifford circuits are generated for multiple numbers of Paulis and nodes, also considering various subgraphs that can appear on a path graph or a heavy-hexagonal graph considered in this paper. $K$ is heuristically chosen so that all the Clifford circuits are produced.

\section{Compression of $M$ Paulis into $M$ qubits} \label{sec:compression}

In here, we adopt the notation $P' \propto P''$ if $P=P'$ or $P=P''$.

\begin{theorem}
For any list of Paulis $P_1,\dots,P_M$ defined over $N > M$ qubits and qubits $Q = \{q_1,\dots, q_M\}$, there exists a Clifford circuit $C$ that localizes them into $Q$, i.e. for all $m=1,\dots,M$ we have
\begin{equation}
    CP_m C^\dagger  \propto \prod_{i=1}^M Z_{q_i}^{b_{m,Z,i}} X_{q_i}^{b_{m,X,i}},
\end{equation}
for some $b_{m,z,i}, b_{m,x,i} \in \{0, 1\}$.
\end{theorem}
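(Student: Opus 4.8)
The plan is to pass to the standard symplectic encoding of the Pauli group and prove the statement as a fact about subspaces of a symplectic $\mathbb{F}_2$-space. Writing each Pauli string modulo phase as a vector in $\mathbb{F}_2^{2N}$ (recording its $Z$- and $X$-support), the commutation relation becomes the standard alternating symplectic form $\langle\cdot,\cdot\rangle$, and by the Aaronson--Gottesman correspondence conjugation $P\mapsto CPC^\dagger$ by a Clifford $C$ realizes, modulo phase, an arbitrary $S\in\mathrm{Sp}(2N,\mathbb{F}_2)$, while conversely every such $S$ is realized by some Clifford. Because the target form in the statement is only required up to proportionality ($\propto$), the signs and powers of $i$ produced by conjugation are irrelevant, so it suffices to work entirely in $\mathbb{F}_2^{2N}$. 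Let $W_Q\subseteq\mathbb{F}_2^{2N}$ be the $2M$-dimensional coordinate subspace spanned by $\{Z_{q_i},X_{q_i}\}_{i=1}^M$; it is nondegenerate, being an orthogonal sum of $M$ hyperbolic planes. The claim is then exactly that there is $S\in\mathrm{Sp}(2N,\mathbb{F}_2)$ with $S(V)\subseteq W_Q$, where $V=\mathrm{span}(v_1,\dots,v_M)$ and $v_m$ encodes $P_m$.

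First I would put $V$ into symplectic normal form. Let $d=\dim V\le M$ and let $r=\dim(V\cap V^\perp)$ be the dimension of its radical. A standard inductive Gram--Schmidt-type procedure over $\mathbb{F}_2$ produces a basis of $V$ consisting of $s=(d-r)/2$ hyperbolic pairs $(u_1,w_1),\dots,(u_s,w_s)$ together with $r$ vectors $z_1,\dots,z_r$ spanning the radical: at each step one picks a vector, pairs it with a symplectic partner inside $V$ if one exists, and otherwise sets it aside as an isotropic generator, recursing inside the symplectic complement. Since the invariants of the restricted (possibly degenerate) alternating form are governed exactly by the pair $(d,r)$ with $d-r$ even, this data is well defined.

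The key step is the dimension count together with extension to a global symplectic map. I would route the hyperbolic pairs onto the first $s$ planes of $W_Q$, $(u_i,w_i)\mapsto(Z_{q_i},X_{q_i})$, and the radical generators onto distinct further planes as $z_j\mapsto Z_{q_{s+j}}$. The number of planes of $W_Q$ used is
\begin{equation}
    s+r=\frac{d-r}{2}+r=\frac{d+r}{2}\le\frac{M+M}{2}=M,
\end{equation}
using $r\le d\le M$, so all targets lie in $Q$ and the assignment respects the hyperbolic/isotropic structure (the $z_j$ are isotropic and orthogonal to all of $V$, matching the $Z_{q_{s+j}}$). Hence this map is an isometry of $V$ onto a subspace of $W_Q$. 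Extending the symplectic-compatible basis $\{u_i,w_i,z_j\}$ of $V$ to a full symplectic basis of $\mathbb{F}_2^{2N}$ (always possible by symplectic basis extension, equivalently Witt's extension theorem) and sending it to the corresponding standard coordinate basis defines $S\in\mathrm{Sp}(2N,\mathbb{F}_2)$ with $S(V)\subseteq W_Q$. Realizing $S$ by a Clifford $C$ and translating back, each $CP_mC^\dagger$ is supported on $Q$, giving the asserted form up to the phase absorbed by $\propto$.

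The main obstacle is carrying out the symplectic normal form correctly over $\mathbb{F}_2$ and pinning down that the image fits into only $M$ qubits; the inequality $(d+r)/2\le M$ is what makes $n=M$ achievable and is the quantitative heart of the theorem (equality occurs precisely when $V$ is totally isotropic of maximal dimension $M$). The remaining pieces---the Clifford/symplectic dictionary and discarding phases via $\propto$---are standard. I would also remark that this existence argument has a constructive counterpart matching the paper's narrative: the normal-form and basis-extension steps can be implemented as an explicit Clifford sequence that removes the Pauli content from the $N-M$ qubits outside $Q$ one at a time, which is exactly the compression performed with the Clifford databases.
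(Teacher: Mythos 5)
Your proof is correct, but it takes a genuinely different route from the paper's. The paper argues by induction on $M$: given a Clifford $C'$ localizing $P_1,\dots,P_{M-1}$ into $q_1,\dots,q_{M-1}$, it factors $C'P_M(C')^\dagger$ into a part supported on those qubits times a residual Pauli $\tilde P$ supported on the remaining qubits, then compresses $\tilde P$ into $q_M$ via the single-Pauli base case (Steiner synthesis), using a Clifford $C''$ that avoids $q_1,\dots,q_{M-1}$ so the earlier localization is untouched. You instead work globally in the symplectic picture: normal form of the (possibly degenerate) restriction of the form to $V=\mathrm{span}(v_1,\dots,v_M)$, an isometry onto a subspace of $W_Q$, Witt extension to $\mathrm{Sp}(2N,\mathbb{F}_2)$, and a Clifford lift. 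Both are sound; the key facts you invoke (Witt extension for alternating forms, surjectivity of the Clifford-to-symplectic map) do hold over $\mathbb{F}_2$. The paper's induction buys constructiveness that mirrors the qubit-by-qubit compression actually implemented with the Clifford databases, while your argument buys a sharper quantitative statement: only $(d+r)/2$ qubits are needed, where $d=\dim V$ and $r$ is the dimension of its radical, which both recovers the paper's tightness remark (for $Z_{q_1},\dots,Z_{q_M}$ one has $d=r=M$, so $(d+r)/2=M$ is forced) and characterizes exactly when compression into fewer than $M$ qubits is possible --- something the paper's proof does not expose. Your remark that the normal-form/extension steps can be unrolled into a qubit-by-qubit Clifford sequence correctly bridges the two viewpoints.
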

\begin{proof}
    We will prove the theorem by induction. The case $M=1$ can be done by using e.g. Steiner Synthesis~\cite{miller2024treespilation} to compress into an arbitrary qubit, and then swapping it with qubit $q_1$.

    Let us assume that $M>1$ and suppose we have a Clifford circuit $C'$ that localizes Paulis $P_1,\dots, P_{M-1}$  into qubits $q_1,\dots, q_{M-1}$. Let $\tilde P$ be defined such that 
    \begin{equation}
        C'P_M(C')^\dagger = \pm \prod_{i=1}^N  Z_{q_i}^{b_{M,Z,i}} X_{q_i}^{b_{M,X,i}} = \tilde P \prod_{i=1}^{M-1}  Z_{q_i}^{b_{M,Z,i}} X_{q_i}^{b_{M,X,i}}. \label{eq:tilde-p}
    \end{equation}
    Note that $\tilde P$ acts with identity on the qubits $q_1,\dots, q_{M-1}$. If $\tilde P \propto I$, then the Paulis are already localized into qubits $q_1,\dots, q_{M-1}$, and thus to $q_1,\dots, q_{M}$. 
    
    Suppose on the contrary that $\tilde P \not\propto I$. But then we can reduce the problem of compressing all $M$ Paulis to compressing $\tilde P$ from qubits $q_{M}\dots, q_N$ into $q_M$ which can be done the same as in the base of the induction, say with Clifford circuit $C''$ that does not act on qubits $q_1,\dots,q_{M-1}$. Eventually, let us show that Clifford circuit $C \coloneqq C''C'$ localizes all the Paulis $P_1,\dots, P_M$ into qubits $q_1,\dots, q_M$. We have
    \begin{equation}
        \begin{split}
        C P_MC^\dagger &= C''C'P_M(C')^\dagger (C'')^\dagger \\
            &= C''  \tilde P \prod_{i=1}^{M-1}  Z_{q_i}^{b_{M,Z,i}} X_{q_i}^{b_{M,X,i}} (C'')^\dagger \\
            & =C''  \tilde P (C'')^\dagger \prod_{i=1}^{M-1}  Z_{q_i}^{b_{M,Z,i}} X_{q_i}^{b_{M,X,i}} \\
            & \propto Z_{q_M}^{b_{M,Z,M}}X_{q_M}^{b_{M,X,M}} \prod_{i=1}^{M-1} Z_{q_i}^{b_{M,Z,i}} X_{q_i}^{b_{M,X,i}} \\
            & = \prod_{i=1}^{M} Z_{q_i}^{b_{M,Z,i}} X_{q_i}^{b_{M,X,i}},
        \end{split}
    \end{equation}
    where the second equality is by Eq.~\eqref{eq:tilde-p}, the third equality uses the fact $C''$ does not act on qubits $q_1,\dots,q_{M-1}$, the third equality (up-to-sign) uses the fact circuit $C''$ localizes $\tilde P_M$ in qubit $q_M$.
    Finally, for $m=1,\dots, M-1$ we have 
    \begin{equation}
        \begin{split}
        C P_mC^\dagger &= C''C'P_m(C')^\dagger (C'')^\dagger \\
            &\propto C''   \prod_{i=1}^{M-1}  Z_{q_i}^{b_{m,Z,i}} X_{q_i}^{b_{m,X,i}} (C'')^\dagger \\
            & =C''  (C'')^\dagger \prod_{i=1}^{M-1}  Z_{q_i}^{b_{m,Z,i}} X_{q_i}^{b_{m,X,i}} \\
            &= \prod_{i=1}^{M-1}  Z_{q_i}^{b_{m,Z,i}} X_{q_i}^{b_{m,X,i}},
        \end{split}
    \end{equation}
    where the first equality (up-to-sign) comes from the definition of $C'$, the next equality comes from the fact $C''$ is not acting on qubits $q_1,\dots, q_{M-1}$, and the last equality coming from the fact $C''$ is unitary.
\end{proof}
Note that the proof is not tight in general, as $4^M-1$ Paulis of the form 
\begin{equation}
  \prod_{i=1}^M Z_{q_i}^{b_{m,Z,i}} X_{q_i}^{b_{m,X,i}},
\end{equation}
for arbitrary $b_{m,Z,i}, b_{m,X,i} \in \{0,1\}$ with at least one nozero value are already compressed into $M$ qubits. On the other hand, we found it to be tight in the worst case scenario as Paulis $Z_{q_1}, \dots, Z_{q_M}$ cannot be compressed to less than $M$ qubits.

\section{Details on numerical experiments}

In all the experiments, we compare our compilation method against our implementations of Steiner Synthesis~\cite{miller2024treespilation} and Lazy Synthesis~\cite{martiel2022architecture}. In all the methods we used Mehlhorn algorithm for finding a Steiner tree~\cite{mehlhorn1988faster}. When using PauliOpt either as a comparison or as a part of our compilation routine, we used the implementation available in~\cite{paulioptcode}. In all the cases in which we used our compilation method, we used precomputed Clifford databases as explained in Sec.~\ref{sec:clifford-database}.

\subsection{Multi-Pauli Lazy Synthesis -- random circuits} \label{sec:numerics-general}

For results presented in Figs~\ref{fig:per-paulis} and~\ref{fig:per-qubits} we considered Jordan-Wigner mapping and Bravyi-Kitaev on a path graph. The initial layout was chosen to be consecutive numbers 0, \dots, $N-1$ for $N$ qubits. For each number of qubits considered, the ansatze were generated randomly by selecting 300 Pauli strings $P_k$ being random 4-products of Majorana strings, to give a reminiscent of the ADAPT-VQE pool presented in~\cite{miller2024treespilation}, which later were added to the ansatz in the form of an exponentiation $\exp(i\theta_kP_k)$. Then, to obtain circuits with different numbers of Paulis, thus complexity, we chose subcircuits consisting of $M\leq 300$ exponentiations in the ansatz.

The SS compilation method was fine-tuned with Qiskit transpiler with optimization level 1. MPLS method was implemented with the following details:
\begin{enumerate}
    \item Pauli lists of length $M$ were selected so that for each list there were  $M \leq $ 3 Paulis that span all the other Paulis in the list
    \item Given a list of Paulis $P_1,\dots,P_M$, circuits for each sublists $P_1,\dots,P_k$ for each $k=1,\dots, M$ were found. The sublist of Paulis was selected so that $\#\text{CNOT}/k$ was minimized.
    \item The compression was done using a pre-computed database and the leaf to be removed was selected randomly. The information was compressed from 4 into 3 qubits.
    \item During the compression phase, whenever a Pauli appeared in CER, and commutativity relations for Paulis in the list allowed for implementing the Pauli, it was implemented.
    \item Once the subgraph containing the information about Paulis consists of 4 nodes only, the remaining Paulis from the list were implemented in batches of 3. 
\end{enumerate}
The Clifford circuits that appeared at the end of LS and MPLS methods were resynthesized using Pauliopt method. 

\subsection{Multi-Pauli Resetted  -- double excitations} \label{sec:numerics-de}
For JW mapping a path graph was considered with layout as explained in the previous subsection. Random PPTT mapping was chosen. The results were generated on IBM Brisbane connectivity. We started by taking a minimum height ternary tree with a root at node labeled 62, and creating a random PPTT mapping on that tree. The root was selected because it is in the middle of the heavy-hexagonal graph,  which in turn guarantees $O(\sqrt{n})$-local Majorana strings for large $n$. The initial layout was coming from PPTT mapping. For such a PPTT mapping, we took a random double excitation operator and synthesized it using three different methods.
\begin{enumerate}
    \item Steiner synthesis as explained in Sec.~\ref{sec:numerics-general}, except that it was first prepended with the trotterization of the 8-Pauli double excitations. The order of trotterization was carried to maximize the CNOT cancellation, by reducing the problem of finding a good order to the Hamiltonian Path problem, in which the distance between the nodes was the number of CNOTs that were not cancelled out.
    \item Lazy synthesis implemented exactly as in the previous subsection with arbitrary trotterization and final Clifford circuits for final resynthesis via PauliOpt.  
    
    \item MPR prepended with the same trotterization order as for lazy synthesis. In this case:
    \begin{enumerate}
        \item We always compressed all 8-Paulis together, in which the leaf to be removed was selected randomly. The compression components were removing qubit by qubit each time compressing the information from some 5 qubits to a subset of 4 qubits.
        \item Once the subgraph containing information about Pauli consisted of 4 nodes, the Paulis were implemented in batches of 3 Paulis. Batching was done because finding Clifford circuits that would implement all 8 Paulis turned out to be infeasible.
        \item After all Paulis were implemented, the Clifford gates coming from implementation Clifford circuits were resynthesized using PauliOpt. Finally, the inverse of the Clifford part from compression was implemented to finalize resetting. 
    \end{enumerate}
\end{enumerate}

\subsection{Multi-Pauli Lazy Clifford synthesis}  \label{sec:numerics-clifford}

For the results presented in Fig.~\ref{fig:clifford}, $K$-gates Clifford circuits were constructed heuristically by repeating the following procedure $K$ times. First, a pair of different qubits was randomly selected. Then on each of the qubits, a random 1-qubit Clifford gate is implemented. Afterward, a CNOT gate was implemented on these qubits. The procedure was repeated $K$ times, and this way we were able to steer the complexity of the Clifford circuits.

All the Clifford circuits were synthesized for a path graph. We compared the following methods:
\begin{enumerate}
    \item Clifford synthesis method from \cite{winderl2023architecture} which we refer to as PauliOpt. We took the original implementation from \cite{paulioptcode}.
    \item Variant of PauliOpt strategy, in which the logical qubits possibly correspond to different physical qubits i.e., the synthesis is done up to a sequence of SWAPs on the right-hand side. In each step a pair of logical $q$ and physical qubits $ q'$ are taken which requires the smallest number of gates to compute first $Z_q$, then $X_q$ on physical qubits $q'$. 
    \item Multi-Pauli Clifford Synthesis method where the order of logical qubits is preserved. The resynthesis procedure works as follows:
    \begin{enumerate}
        \item For each qubit that does not disconnect the subgraph, two circuits were generated: the first one works by implementing $Z_q$ on qubit $q$ and later implementing $X_q$ on qubit $q$, Pauli by Pauli as in the previous two variants. The second circuit was generated by first compressing information $Z_q,X_q$ into the qubit removing other qubits one by one in a multi-Pauli fashion. Note that after the compression is done, $Z$, $X$ and $Y$ registers containing $Z_q$, $X_q$, $Y_q$ up to permutation and sign. This can be fixed by applying 1-qubit operations. This method is similar in spirit to~\cite{dilkes2024greedy}, although differ in technical details   
        \item All the circuits are investigated and the qubit $q$ with the corresponding circuit having a smaller number of CNOTs is chosen. 
        \item qubit $q$ is removed from the subgraph and the procedure is repated until the subgraph has no nodes.
    \end{enumerate}
    \item Multi-Pauli Lazy Clifford synthesis method in which the logical qubits possibly correspond to different physical qubits. The same procedure as in the previous point is implemented, except that Paulis $Z_q$ and $X_q$ might be uncomputed to possibly different qubits $q'$, and thus all possible pairs of $(q,q')$ have to be investigated.
\end{enumerate}

\subsection{Multi-Pauli lazy synthesis -- treespilation} \label{sec:numerics-treespilation}

For the results presented in Table~\ref{tab:treespilation}, we used the exact implementation of the treespilation algorithm from \cite{miller2024treespilation}. For the simulated annealing within the treespilation algorithm, the following parameters were taken: cooling final values = 1, cooling initial value = 50, and cooling rate = 0.9995.  The treespilation algorithm was combined with SS, LS, and MLPS. In the case of SS, compilation of the ansatz was repeated for 8 times. For each of the methods, the treespilation algorithm itself was repeated 16 times and the best result was taken. For the molecules, ground state ansatze prepared by ADAPT-VQE algorithm using majoranic pool for BODIPY-4 from~\cite{bodipy}, and  H$_6$, LiH, N$_2$, BeH$_2$ molecules were used. For the connectivity, heavy-hexagonal was taken. We used SS and LS compilers as described in Sec.~\ref{sec:numerics-general}.

MPLS compiler was designed taking into account the following steps:
\begin{enumerate}
    \item If a Pauli to be implemented is already present in the intermediate representation, it is implemented. This step is repeated until no such Pauli from the group is found.
    \item If all the Paulis to be implemented from a group are localized within 4 qubits forming a connected induced subgraph, they are all implemented using a Clifford circuit from the implementation database in batches of size 3.
    \item If a Pauli to be implemented is localized within 2 qubits forming a connected subgraph, it is implemented. This step is repeated until no such Pauli from the group is found.
    \item If a Pauli to be implemented is localized within 3 qubits forming a connected subgraph, it is implemented. This step is repeated until no such Pauli from the group is found.
    \item If a Pauli to be implemented is localized within 4 qubits forming a connected subgraph, it is implemented. This step is repeated auntil no such Pauli from the group is found.
    \item The compression component is executed, in which a qubit from the subgraph from which the information is removed is chosen to be the one that requires the smallest number of 2-qubit gates. After this step, the procedure repeats starting from step~1. 
\end{enumerate}
The Pauli strings were selected so that there were at most 3 Paulis that were algebraically generating all the Paulis in a group. Given a list of Paulis $P_1,\dots,P_M$, a different sublist $P_1,\dots,P_m$ for $m\leq M$ was compressed and implemented Paulis were selected to be those for which $\#\text{CNOT}/m$ was minimized. Resetting is only performed at the end of the circuit and uses the method described in Sec.~\ref{sec:numerics-clifford}.

\end{document}